\newcommand{\remove}[1]{}
\newtheorem{example}{Example}[section]
\newtheorem{definition}{Definition}[section]
\newtheorem{lemma}{Lemma}[section]
\newtheorem{corollary}{Corollary}[section]
\newtheorem{theorem}{Theorem}[section]
\newenvironment{proof}{\vspace{8pt}
\noindent{\bf Proof}: }{{\hfill {\large $\Box$}} \vspace{8pt}}
\begin{document}

\bibliographystyle{plain}

\title{A Coding-Theoretic Application of Baranyai's Theorem}

\author{Liang Feng Zhang\\
Department of Computer Science\\ University of Calgary\\
liangf.zhang@gmail.com}
\date{}
\maketitle

\begin{abstract}
Baranyai's theorem is a well-known theorem in the theory of hypergraphs.
A corollary of this theorem says that one can partition the family of all $u$-subsets
of an $n$-element set into ${n-1\choose u-1}$ sub-families such that each sub-family form
a partition of the $n$-element set, where $n$ is divisible by $u$.
In this paper, we present a coding-theoretic application of Baranyai's theorem
(or equivalently, the corollary). More precisely, we propose the first purely
combinatorial construction of locally decodable codes. Locally decodable codes
are error-correcting codes that allow the recovery of any message bit by looking at
only a few bits of the codeword. Such codes have
attracted a lot of attention in recent years.
We stress that our construction does not improve the parameters of known constructions.
What makes it interesting  is the underlying combinatorial techniques
and their potential in future applications.
\end{abstract}

\section{Introduction}

\subsection{Baranyai's Theorem}
\label{section:baranyai}

Let $\Omega = \{\omega_1, \ldots, \omega_n\}$ be a finite set of cardinality $n$.
A {\em hypergraph} on $\Omega$ is a family
$H =\{E_1, \ldots,E_m\}$ of {\em nonempty} subsets of $\Omega$ such that
$\bigcup_{j=1}^m E_j=\Omega$. The  $n$ elements $\omega_1,\ldots, \omega_n$
 are called {\em vertices} of $H$ and the $m$ subsets $E_1, \ldots,E_m$ are called
  {\em edges} of
$H$.  For every vertex $\omega\in \Omega$, the {\em star} with center $\omega$ is
denoted by $H(\omega)$
and defined to the set of all edges of $H$ that contain $\omega$, i.e.,
$H(\omega)=\{E\in H: \omega\in E\}$.
The {\em degree} of $\omega$ is denoted by $d_H(\omega)$ and defined to be $|H(\omega)|$.
The hypergraph $H$ is said to be {\em $r$-regular} if $d_H(\omega)=r$ for every
$\omega\in \Omega$
and {\em $u$-uniform} if $|E|=u$ for every $E\in H$.
As an alternative, the hypergraph $H$ can also be defined by its {\em incidence matrix}
$A=(a_{ij})_{n\times m}$,   where the rows of $A$ are labeled by the
$n$ vertices $\omega_1,\ldots, \omega_n$,  the columns of $A$ are labeled by  the $m$ edges
$E_1,\ldots, E_m$,   and
\begin{equation}
a_{ij}=\begin{cases}
1,  & {\rm if}~\omega_i\in E_j;\\
0, & {\rm if}~\omega_i\notin E_j
\end{cases}
\end{equation}
for every $i\in [n]\triangleq \{1,2,\ldots,n\}$ and $j\in [m]$.
It is straightforward to see that  every row of the incidence matrix of an $r$-regular
hypergraph contains exactly $r$ 1's and  every column of the incidence matrix of
a $u$-uniform hypergraph contains exactly $u$ 1's.
For every $J\subseteq [m]$, the {\em partial hypergraph} of $H$
generated by $J$ is defined to be the sub-family $\bar{H}=\{E_j: j\in J\}$ of $H$, where
the vertex set of $\bar{H}$ is $\bar{\Omega}=\bigcup_{j\in J} E_j$.
It is also easy to see that the incidence matrix $\bar{A}$ of $\bar{H}$ is in fact
a submatrix
of $A$ whose rows are labeled by elements in $\bar{\Omega}$ and columns are labeled by
elements in $\bar{H}$.  Let $u\in [n]$.  The hypergraph $H$ is called a
{\em $u$-complete
hypergraph of order $n$} and denoted by $K_{n}^u$ if it consists of
all $u$-subsets of $\Omega$.
In particular, we have that $m={n\choose u}$ when $H=K_n^u$.

The  study of complete hypergraphs have been an interesting problem in
the theory of hypergraphs (see Section 5 of Chapter 4 in Berge \cite{Ber89}).
In particular,  Baranyai had an in-depth study of the {\em edge colorings} (see
page 137 of \cite{Ber89} for the definition) of
the complete hypergraphs and obtained  the following theorem (Theorem 11, page 143 of \cite{Ber89}):
\begin{theorem}
{\em (Baranyai 1975)}
Let $n$ and $u$ be integers such that $n \geq u \geq 2$.  Let
 $m_1, \ldots,m_k$  be $k$ positive integers such that $m_1 + \cdots +m_k ={n\choose u}$.
Then the complete hypergraph $K_n^u$ can be divided into
 $k$ partial hypergraphs $H_1, \ldots, H_k$   such that
\begin{itemize}
\item $|H_i|=m_i$ for every $i\in [k]$;
\item $|H_i\cap H_j|=\emptyset$ whenever $i,j\in[k]$ and $i\neq j$; and
\item
$\displaystyle
\left \lfloor \frac{u m_i}{n} \right\rfloor\leq d_{H_i}(\omega)\leq
\left \lceil \frac{u m_i}{n} \right \rceil$ for every $i \in [k]$ and $\omega \in \Omega$.
\end{itemize}
\end{theorem}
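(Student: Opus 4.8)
The plan is to deduce this theorem from a stronger, ``self-reproducing'' statement and to prove that statement by induction on $n$, with the integrality of transportation polytopes (equivalently, integral network flows / total unimodularity) as the engine. The generalization I would induct on allows several edge-sizes simultaneously: fix non-negative integers $c_0,\dots,c_n$, let $\mathcal H$ be the family consisting of $c_\ell$ disjoint copies of every $\ell$-subset of $\Omega$, and let $m_{i,\ell}$ ($i\in[k]$, $0\le\ell\le n$) be non-negative integers with $\sum_i m_{i,\ell}=c_\ell\binom n\ell$ for every $\ell$. The claim is that $\mathcal H$ partitions into $H_1,\dots,H_k$ with $H_i$ containing exactly $m_{i,\ell}$ edges of size $\ell$ and, writing $\mu_i=\tfrac1n\sum_\ell\ell\,m_{i,\ell}$, with every vertex having degree $\lfloor\mu_i\rfloor$ or $\lceil\mu_i\rceil$ in $H_i$ (it may be necessary to strengthen this so that the entire degree matrix is prescribed in advance, but the shape is unchanged). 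Baranyai's theorem is the instance $c_u=1$, all other $c_\ell=0$, $m_{i,u}=m_i$: then $\mu_i=um_i/n$ and the three bulleted conclusions drop out.

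The one external ingredient is the classical fact that a polytope cut out by box constraints together with ``row-sum'' and ``column-sum'' equalities or inequalities of a matrix of variables has only integral vertices, since its constraint matrix is totally unimodular. I use it as a rounding principle: every real matrix can be replaced by an integral one in which each entry has moved by less than $1$ and each row sum and each column sum has been replaced by its floor or its ceiling (hence preserved if it was already an integer). This is applied repeatedly in the induction below, and — in the form where the strengthened statement prescribes a degree matrix — to produce that matrix for Baranyai's instance: the matrix each of whose $n$ rows equals $(um_1/n,\dots,um_k/n)$ has column sums $um_i$ and common row sum $\binom{n-1}{u-1}$, both integral, so it rounds to an integral degree matrix with entries in $\{\lfloor um_i/n\rfloor,\lceil um_i/n\rceil\}$ and the correct marginals.

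The induction on $n$ has a trivial base case $n=0$, where $\Omega=\varnothing$, $\mathcal H$ is $c_0$ copies of the empty set, the $m_{i,0}$ prescribe how to distribute them, and the degree condition is vacuous. For the step, single out a vertex $\omega_n$. Every edge of $\mathcal H$ either avoids or contains $\omega_n$, and deleting $\omega_n$ from those that contain it is a bijection of $\mathcal H$ onto a family $\mathcal H'$ on $\Omega\setminus\{\omega_n\}$ of the same type, now with $c'_\ell=c_\ell+c_{\ell+1}$ copies at size $\ell$. First I fix the split: let $q_{i,\ell}$ be the number of size-$\ell$ edges of $H_i$ that should pass through $\omega_n$, and $p_{i,\ell}=m_{i,\ell}-q_{i,\ell}$ the number that should not. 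The constraints on $q$ are $0\le q_{i,\ell}\le m_{i,\ell}$, the column equality $\sum_i q_{i,\ell}=c_\ell\binom{n-1}{\ell-1}$, and — because $\sum_\ell q_{i,\ell}$ is exactly the degree of $\omega_n$ in $H_i$ — membership $\sum_\ell q_{i,\ell}\in\{\lfloor\mu_i\rfloor,\lceil\mu_i\rceil\}$. The real matrix $q^{*}_{i,\ell}=\tfrac\ell n\,m_{i,\ell}$ meets all of these with the column sums integral and the row sums exactly $\mu_i$, so the rounding principle yields a legal integral $q$. Then $p_{i,\ell}+q_{i,\ell+1}$ is an admissible target for $\mathcal H'$ (one checks the column sums equal $c'_\ell\binom{n-1}{\ell}$), the induction hypothesis partitions $\mathcal H'$ accordingly, and transporting that partition back across the bijection — after re-indexing copies of equal subsets so that each class's $p/q$ split is honoured, a further small transportation-feasibility step — gives the desired partition of $\mathcal H$.

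The delicate point, and the place where the theorem would break for a thoughtless choice of $q$, is the degree arithmetic. After deletion the relevant average becomes $\mu'_i=\tfrac1{n-1}\sum_\ell\ell\,(p_{i,\ell}+q_{i,\ell+1})=\tfrac{n\mu_i-\deg_{H_i}(\omega_n)}{n-1}$, which differs from $\mu_i$ by less than $1/(n-1)$; but pulling a partition of $\mathcal H'$ back leaves the degrees of $\omega_1,\dots,\omega_{n-1}$ unchanged, so I need each of them, known only to lie within $1$ of $\mu'_i$, to lie within $1$ of $\mu_i$ as well. Since $\mu_i$ and $\mu'_i$ are bracketed by the same two integers only under a mild constraint on the fractional part of $\mu_i$, forcing $q$ (equivalently, $\deg_{H_i}(\omega_n)$) so that floors and ceilings stay aligned through every level of the recursion — most likely by carrying a prescribed degree matrix through the induction in addition to the counts $m_{i,\ell}$ and appealing once more to the integrality lemma for feasibility — is the obstacle I would expect to cost the most work.
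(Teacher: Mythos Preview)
The paper does not prove this theorem at all: it is quoted verbatim as ``Theorem 11, page 143 of \cite{Ber89}'' and used as a black box, with only the corollary (the case $u\mid n$, $m_1=\cdots=m_k=n/u$) being invoked later. So there is no proof in the paper to compare your proposal against.

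That said, your sketch is essentially the classical proof of Baranyai's theorem as it appears in the literature (Baranyai's original argument, and the presentations in Berge or in van Lint--Wilson): strengthen the statement to allow mixed edge sizes so that the induction on $n$ closes, peel off one vertex, and use the integrality of transportation polytopes (equivalently, integral max-flow) to round the fractional split at each step. You have correctly identified the one genuinely delicate point, namely that the floor/ceiling window for vertex degrees must be preserved as one passes from $\mu_i$ to $\mu'_i$, and your proposed fix --- carrying a prescribed integral degree matrix through the induction rather than just the averages $\mu_i$ --- is exactly how the standard proof handles it. With that strengthening in place the argument goes through; without it, as you note, the induction does not close. So your outline is sound, but be aware that for the purposes of this paper a citation suffices.
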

In particular, when $u |n$, one can set $k={n-1\choose u-1}$ and $m_1=\cdots=m_k=
n/u$ and
have
the following corollary:
\begin{corollary}
\label{corollary:baranyai}
{\em (Baranyai 1975)}
If $u|n$, then the complete hypergraph $K_n^u$ can be divided into $k={n-1\choose u-1}$
partial hypergraphs $H_1,\ldots, H_k$ such that
\begin{itemize}
\item $H_i$ is a $1$-regular hypergraph
of order $n$ and has $n/u$ edges for every $i\in[k]$; and
\item   $|H_i\cap H_j|=\emptyset$ whenever $i,j\in[k]$ and $i\neq j$.
\end{itemize}
\end{corollary}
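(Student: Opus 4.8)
The plan is to derive the corollary from Baranyai's theorem simply by specializing the parameters $m_1,\ldots,m_k$. Concretely, I would set $k={n-1\choose u-1}$ and $m_1=\cdots=m_k=n/u$, exactly as the statement suggests, and then read off the three bullets of the theorem for this choice.

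First I would check that this is an admissible choice of parameters for the theorem. Since $u\mid n$, the quantity $n/u$ is a positive integer, so each $m_i$ is a legitimate positive integer. The requirement $m_1+\cdots+m_k={n\choose u}$ then amounts to the identity ${n-1\choose u-1}\cdot\frac{n}{u}={n\choose u}$, which is the standard absorption identity $u{n\choose u}=n{n-1\choose u-1}$. (If $u=1$ the claim is immediate, since $K_n^1$ is already a $1$-regular hypergraph with $n$ edges and $k=1$; so we may assume $u\ge 2$, as Baranyai's theorem requires, and $n\ge u$ holds trivially.)

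Next I would invoke Baranyai's theorem with these parameters. Its first two bullets immediately give pairwise disjoint partial hypergraphs $H_1,\ldots,H_k$ of $K_n^u$ with $|H_i|=n/u$ for each $i$. The only point left is the regularity. Since $m_i=n/u$ we have $um_i/n=1$, so the third bullet specializes to $\lfloor 1\rfloor\le d_{H_i}(\omega)\le\lceil 1\rceil$, i.e., $d_{H_i}(\omega)=1$ for every vertex $\omega\in\Omega$. Thus each $H_i$ is $1$-regular of order $n$. Finally, since $H_i$ consists of $u$-subsets of $\Omega$ and each vertex lies in exactly one edge of $H_i$, the edges of $H_i$ form a partition of $\Omega$ into blocks of size $u$; counting vertices gives exactly $n/u$ such blocks, consistent with $|H_i|=n/u$.

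I do not expect any real obstacle: the corollary is a pure specialization of the theorem, and the only non-automatic points are recognizing the binomial identity that certifies admissibility of the parameters and noticing that the degree window $[\lfloor um_i/n\rfloor,\lceil um_i/n\rceil]$ collapses to the single value $1$ when $m_i=n/u$.
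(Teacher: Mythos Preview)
Your proposal is correct and is exactly the approach the paper takes: it derives the corollary from Baranyai's theorem by setting $k={n-1\choose u-1}$ and $m_1=\cdots=m_k=n/u$, with the paper leaving implicit the binomial identity and the collapse of the degree window that you spell out.
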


In fact, Corollary \ref{corollary:baranyai} says that the family of all $u$-subsets
of an $n$-element set $\Omega$ can be {\em partitioned} into ${n-1\choose u-1}$ sub-families
such that each sub-family forms a {\em partition} of the set $\Omega$.

\begin{example}
\label{example:partition}
Let $n=6$ and $u=2$. The incidence matrix $A$ of $K_n^u$ can be depicted by
Figure \ref{fig:incidence-matrix}, where the rows and columns  of $A$ are labeled by
elements of $\Omega=\{1,2,3,4,5,6\}$ and all 2-subsets of $\Omega$, respectively.
We can divide $K_n^u$  into
5 partial hypergraphs:
$H_1=\{12, 34, 56\},
 H_2=\{13, 25, 46\},
 H_3=\{14, 26, 35\},
 H_4=\{15, 24, 36\}$ and $
 H_5=\{16, 23, 45\}$
 such that both consitions in  Corollary \ref{corollary:baranyai}
 are satisfied.  We also highlight the incidence matrix
 of $H_1$ in Figure \ref{fig:incidence-matrix}.
\begin{figure}[h]
$$\hspace{0.6cm}
  \begin{array}{ccccccccccccccccccccc}
      {\bf 1} & { 1}  & { 1}
    & { 1} & { 1}  & { 2}
    & { 2} & { 2}  & { 2}
    & {\bf 3} & { 3}  & { 3}
    & { 4} & { 4}  & {\bf 5}   \\

      {\bf 2} & { 3}  & { 4}
    & { 5} & { 6}  & { 3}
    & { 4} & { 5}  & { 6}
    & {\bf 4} & { 5}  & { 6}
    & { 5} & { 6}  & {\bf 6}
\end{array}
$$$$
  \begin{array}{c}
   1  \\
  2 \\
  3 \\
  4 \\
  5 \\
  6
  \end{array}	
\left(
  \begin{array}{ccccccccccccccc}
  {\bf 1} & 1 & 1 & 1 & 1 & 0 & 0 & 0 & 0 & {\bf 0} & 0 & 0 & 0 & 0 & {\bf 0} \\
  {\bf 1} & 0 & 0 & 0 & 0 & 1 & 1 & 1 & 1 & {\bf 0} & 0 & 0 & 0 & 0 & {\bf 0} \\
  {\bf 0} & 1 & 0 & 0 & 0 & 1 & 0 & 0 & 0 & {\bf 1} & 1 & 1 & 0 & 0 & {\bf 0} \\
  {\bf 0} & 0 & 1 & 0 & 0 & 0 & 1 & 0 & 0 & {\bf 1} & 0 & 0 & 1 & 1 & {\bf 0} \\
  {\bf 0} & 0 & 0 & 1 & 0 & 0 & 0 & 1 & 0 & {\bf 0} & 1 & 0 & 1 & 0 & {\bf 1} \\
  {\bf 0} & 0 & 0 & 0 & 1 & 0 & 0 & 0 & 1 & {\bf 0} & 0 & 1 & 0 & 1 & {\bf 1} \\
\end{array}
\right)
$$
\label{fig:incidence-matrix}
\caption{Incidence matrix of $K_6^2$}
\end{figure}
\end{example}

Baranyai's theorem has found many interesting applications (for example,
see Section 6, Chapter 4 in \cite{Ber89}).
In this paper, we present a new application of this theorem in the
construction of {\em locally decodable codes} \cite{Yek12}.

\subsection{Locally Decodable Codes}

Let $\mathbb{F}$ be a finite field.
 A classical error-correcting code \cite{MS77} $C: \mathbb{F}^n\rightarrow \mathbb{F}^N$  allows one to encode any
message $x=x_1\cdots x_n$   as a codeword $C(x)$  such that the message  can be recovered
 even if
$C(x)$ gets corrupted in a number of coordinates.
However, to recover even  one symbol
of the message, one has to consider all or most of the coordinates of the
codeword. In such a scenario, more efficient  schemes are possible and they are known as
locally decodable codes.
In such codes, a probabilistic decoder $D$ can  recover any particular symbol $x_i$ of the message
with very good probability by looking at several coordinates of $C(x)$ even
if a constant fraction of $C(x)$ has been corrupted.
 For any $y,z\in \mathbb{F}^N$, we denote by
  $\Delta(y,z)$ their {\em Hamming distance}, i.e., the number of coordinates where they differ.
\begin{definition}
\label{definition:ldc}
{\em (Locally Decodable Code)}
A code $C:\mathbb{F}^n\rightarrow \mathbb{F}^N$ is said to be
$(\rho,\delta,\epsilon)$-{\em locally decodable} if there is a
probabilistic decoder $D$ (which uses  random coins in decoding)  such that
\begin{enumerate}
\item for every $x\in \mathbb{F}^n, i\in[n]$ and $y\in \mathbb{F}^N$ such that
$\Delta(C(x),y)\leq \delta N$, it holds that
$
\Pr[D^{y}(i)=x_i]>1-\epsilon,
$
where the probability is taken over the random coins of  $D$ and
$D^y$ means that $D$ only looks at a number of coordinates of $y$;
\item $D$ looks at at most  $\rho$ coordinates of  the word $y$.
\end{enumerate}
\end{definition}
The quality of $C$ is measured by its  {\em query complexity} $\rho$ and
{\em length} $N$ (both as a function of $n$). Ideally, one would like
 both $\rho$ and $N$ to be as small as possible.

\begin{example}
{\em (Walsh-Hadamard Code, page 249 or 382 of \cite{AB09})} The best example of
locally decodable code  is the
well-known Walsh-Hadamard code $C:\{0,1\}^n\rightarrow \{0,1\}^{2^n}$  whose generator
matrix takes all vectors in $\{0,1\}^n$ as columns. For every message
$x\in \{0,1\}^n$, the coordinates of  $C(x)$  are labeled by the vectors in $\{0,1\}^n$.
In particular, the coordinate labeled by $v\in \{0,1\}^n$ is  equal to
$\sum_{i=1}^n x_iv_i\bmod 2$. Given a word $y\in \{0,1\}^{2^n}$ such that
$\Delta(C(x),y)\leq\delta \cdot 2^n$, a decoder $D$ may recover a  bit $x_i$ by
 looking at two random bits of $y$ labeled by $v, v+e_i\in \{0,1\}^n$ and
then outputs  their sum, where $e_i=(0,\ldots,1,\ldots,0)\in \{0,1\}^n$
is the $i$th unit vector.  Clearly, each of the two bits is corrupted with probability
at most $\delta$ and thus the decoder can output the correct $x_i$ with probability
at least $1-2\delta$. Hence, the Walsh-Haramard code $C$ is a $(2,\delta,2\delta)$-locally
decodable code that encodes $k$-bit messages as $2^n$-bit codewords (i.e.,  $N=2^n$).
\end{example}

Katz and Trevisan \cite{KT00}
were the first to formally define locally decodable codes.
In recent years, the construction of locally decodable codes have attracted a large amount of attention
 \cite{Yek07,Efr09,FZ10}. The interested readers are referred to Yekhanin
 \cite{Yek12} for a good survey of locally decodable codes.

 \subsection{Results}

While the series of works mentioned above focus on improving
the parameters (i.e., $\rho$ and $N$) of the locally decodable codes and require nice algebraic
ideas, in this paper we are
 interested in the {\em connection} between locally decodable codes and  combinatorial objects in
 discrete mathematics. In particular, we propose the {\em first purely combinatorial} construction
 of locally decodable codes  which is  based on the hypergraphs in Section
 \ref{corollary:baranyai}. More precisely, we show the following theorem:
 \begin{theorem}
 \label{theorem:ldc}
For any odd positive integer $\rho$, there is a binary linear
$(\rho, \delta, \rho^2\delta/(\rho-1))$-locally decodable code that
encodes  $n$-bit messages as  $2^{{\bf H}(1/\rho)n}$-bit codewords, where
${\bf H}(s)=-s\log_2 s-(1-s)\log_2(1-s)$ is the binary entropy function. 
\end{theorem}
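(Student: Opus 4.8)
The plan is to exhibit the code explicitly and then read off the decoder from a matching structure supplied by Corollary~\ref{corollary:baranyai}. First I would set $u=n/\rho$ (assuming $\rho\mid n$; otherwise pad the message with a few dummy zeros, which changes the length only by a factor depending on $\rho$) and define $C:\{0,1\}^n\to\{0,1\}^N$ by indexing the $N=\binom{n}{u}$ coordinates by the $u$-subsets $S\subseteq[n]$ and setting $C(x)_S=\bigoplus_{t\in S}x_t=\langle x,\chi_S\rangle$, where $\chi_S\in\{0,1\}^n$ is the indicator of $S$. This $C$ is binary and $\mathbb{F}_2$-linear, and it is injective because the $\chi_S$ span $\{0,1\}^n$. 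Its length satisfies $N=\binom{n}{n/\rho}\le 2^{\mathbf{H}(1/\rho)n}$ by the standard binomial estimate, so the length requirement is met (one may pad with zero coordinates to reach exactly $2^{\mathbf{H}(1/\rho)n}$).

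Next I would reduce local decoding to a covering problem. Call a collection $R=\{S_1,\dots,S_\rho\}$ of $\rho$ coordinates a \emph{recovery set for position $i$} if $\chi_{S_1}\oplus\cdots\oplus\chi_{S_\rho}=e_i$; then $\bigoplus_{j=1}^\rho C(x)_{S_j}=\langle x,e_i\rangle=x_i$, so reading the $\rho$ symbols of $R$ from the received word $y$ and XOR-ing them returns $x_i$ whenever none of those positions was corrupted. Suppose that for each $i$ we have a family $\mathcal M_i$ of pairwise disjoint recovery sets for $i$ covering a $c$-fraction of the coordinates; the decoder picks $R\in\mathcal M_i$ uniformly at random, queries it, and outputs the XOR. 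Since $\mathcal M_i$ then consists of $cN/\rho$ disjoint $\rho$-tuples and a corruption of at most $\delta N$ symbols meets at most $\delta N$ of them (by disjointness), the decoder errs with probability at most $\delta N/(cN/\rho)=\rho\delta/c$, and it looks at only $\rho$ coordinates. Taking $c=(\rho-1)/\rho$ yields exactly the claimed error probability $\rho^2\delta/(\rho-1)$.

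It therefore remains to construct, for every $i\in[n]$, a family of pairwise disjoint recovery sets for $i$ covering a $(\rho-1)/\rho$ fraction of all $u$-subsets, and this is where Baranyai's theorem enters. By Corollary~\ref{corollary:baranyai} the $u$-subsets of $[n]$ partition into $\binom{n-1}{u-1}$ parallel classes, each a partition of $[n]$ into $\rho$ blocks of size $u$; every $u$-subset lies in a unique class and every class has a unique block through $i$. A short incidence count modulo $2$ shows that a recovery set for $i$ must contain an odd number of blocks through $i$, and that $\rho$ blocks of size $u$ can XOR to $e_i$ only when $\rho u$ is odd — so I would additionally arrange $u=n/\rho$ odd, again by padding. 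Using the parallel classes as scaffolding (keeping, in each class, the block through $i$, and recombining the remaining blocks across classes into groups whose indicator vectors cancel down to $e_i$ rather than to the all-ones vector $\mathbf 1$ that the blocks of a single class XOR to), one assembles disjoint recovery sets for $i$, and a double count of how many $u$-subsets are consumed gives the $(\rho-1)/\rho$ coverage.

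The hard part will be precisely this last assembly: the blocks of one parallel class are disjoint, so they always XOR to $\mathbf 1$, never to the weight-one vector $e_i$, which forces one to mix blocks from several classes (so that overlaps, not just unions, appear) while keeping the recovery sets pairwise disjoint and controlling the coverage fraction exactly. The remaining ingredients — the parity bookkeeping that forces $u$ odd, the padding reduction to arbitrary $n$ and arbitrary odd $\rho$, and the verification that padding degrades $\delta$ negligibly — are routine once the covering family is in place.
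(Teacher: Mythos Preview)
Your framework is right --- a linear code with columns indexed by $u$-subsets, recovery sets as $\rho$-tuples of columns summing to $e_i$ over $\mathbb{F}_2$, and a decoder that samples uniformly from a large disjoint family of such tuples --- but the specific encoding you chose leaves the decisive step unfinished. With $C(x)_S=\langle x,\chi_S\rangle$ and $u=n/\rho$, the $\rho$ blocks of any parallel class of $K_n^u$ XOR to $\mathbf{1}$, never to $e_i$, and you say only that one should ``mix blocks from several classes'' to repair this. That mixing \emph{is} the whole construction, and you have not carried it out; it is not at all clear that pairwise disjoint mixed $\rho$-tuples summing to $e_i$ can be manufactured at the required $(\rho-1)/\rho$ density from the Baranyai decomposition of $K_n^u$, and your own parity observation (forcing $u$ odd) already shows that the blocks of the parallel classes cannot simply be rearranged within classes.

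The paper removes the difficulty with two small changes that make any mixing unnecessary. First, take $n=\rho u+1$ (so $u\mid n-1$ rather than $u\mid n$). Second, use as generator columns the \emph{complements} $G_S=\mathbf{1}+\chi_S$ rather than $\chi_S$. Now fix $i$ and restrict to the $\binom{n-1}{u}$ columns with $i\notin S$; these are exactly the columns with a $1$ in row $i$, and they form a $\lambda=1-u/n>(\rho-1)/\rho$ fraction of all columns. Apply Corollary~\ref{corollary:baranyai} to the $(n-1)$-element set $[n]\setminus\{i\}$: its $u$-subsets split into $\binom{n-2}{u-1}$ parallel classes, each a partition of $[n]\setminus\{i\}$ into $\rho=(n-1)/u$ blocks $S_1,\dots,S_\rho$. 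For such a class,
\[
\sum_{j=1}^{\rho} G_{S_j}
= \rho\cdot\mathbf{1}+\sum_{j=1}^{\rho}\chi_{S_j}
= \mathbf{1}+\chi_{[n]\setminus\{i\}}
= e_i
\]
over $\mathbb{F}_2$, using only that $\rho$ is odd. So each parallel class is \emph{already} a recovery set for $i$, the classes are pairwise disjoint by Baranyai, and together they cover all of $T_i$. There is no recombination step, no parity constraint on $u$, and the error bound $\rho\delta/\lambda<\rho^2\delta/(\rho-1)$ drops out immediately. Your ``hard part'' is therefore not a gap to be filled later but a signal that the encoding should be tweaked: once you complement the columns and shift $n$ by one, Baranyai does all the work directly.
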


\section{The  Combinatorial Construction}

In this section, we present our purely combinatorial costruction of locally decodable
codes and prove Theorem  \ref{theorem:ldc}.
We firstly give a  technical lemma and then present both the encoding and decoding algorithms
of our locally decodable codes.

\subsection{A Technical Lemma}

\begin{lemma}
\label{lemma:technical}
 Let $C: \mathbb{F}^n\rightarrow \mathbb{F}^N$
be a linear code with generator matrix $G=[G_1,\ldots, G_N]$, where
$G_1,\ldots, G_N\in \mathbb{F}^n$
are the columns of $G$.
Let $k$ be a positive integer such that $k|\lambda N$, where  $0<\lambda\leq 1$
and $\lambda N$ is an integer.    Suppose there are $n$ subsets $T_1,\ldots, T_n\subseteq [N]$,
each of cardinality $\lambda N$, such that
\begin{itemize}
\item[\rm (i)] for every $i\in [n]$,  the set $T_i$ has a partition
 $T_i=T_{i1}\cup \cdots \cup T_{ik}$ such that
 $|T_{ij}|=\lambda N/k$ for every $j\in [k]$; and
\item[\rm (ii)] for every $i\in [n]$ and $j\in [k]$, the {\em $i$th} unit vector $e_i$ is
a linear combination of   $\{G_\ell: \ell\in T_{ij}\}$.
\end{itemize}
Then the code $C$ is $(\rho,\delta, \rho\delta/\lambda)$-locally decodable, where $\rho=\lambda N/k$.
\end{lemma}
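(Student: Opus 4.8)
The plan is to build the probabilistic decoder directly from the linear-algebraic data in (i) and (ii), and then bound its error probability by a short counting argument that is the only place where the block structure of the $T_i$ really gets used.

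First I would record what linearity of $C$ with generator matrix $G=[G_1,\dots,G_N]$ means: for every message $x\in\mathbb{F}^n$ and every coordinate $\ell\in[N]$ we have $C(x)_\ell=\langle x,G_\ell\rangle$. Now fix $i\in[n]$ and $j\in[k]$. By hypothesis (ii) there are scalars $c_\ell^{(i,j)}\in\mathbb{F}$ (depending only on $i,j$, not on $x$) with $e_i=\sum_{\ell\in T_{ij}}c_\ell^{(i,j)}G_\ell$. Taking the inner product with $x$ gives
$x_i=\langle x,e_i\rangle=\sum_{\ell\in T_{ij}}c_\ell^{(i,j)}\,C(x)_\ell$,
so $x_i$ is reconstructed \emph{exactly} from the $\rho=\lambda N/k$ codeword symbols indexed by $T_{ij}$ (recall $|T_{ij}|=\lambda N/k$ by (i)). These coefficients can be hard-wired into the decoder.

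Next I would define the decoder $D$: on input $i\in[n]$ with oracle access to a received word $y\in\mathbb{F}^N$, it picks $j\in[k]$ uniformly at random, reads the $\rho$ symbols $\{y_\ell:\ell\in T_{ij}\}$, and outputs $\sum_{\ell\in T_{ij}}c_\ell^{(i,j)}y_\ell$. The query complexity is $|T_{ij}|=\rho$, as required. For correctness, take $x\in\mathbb{F}^n$ and $y$ with $\Delta(C(x),y)\le\delta N$, and set $E=\{\ell\in[N]:y_\ell\ne C(x)_\ell\}$, so $|E|\le\delta N$. If the chosen block satisfies $T_{ij}\cap E=\emptyset$, then $y_\ell=C(x)_\ell$ for all $\ell\in T_{ij}$ and the previous paragraph shows $D$ outputs exactly $x_i$. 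Hence $D$ can only err when the random block $T_{ij}$ meets $E$.

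Finally, the crux: since $T_{i1},\dots,T_{ik}$ are pairwise disjoint (they partition $T_i$), each corrupted coordinate in $E\cap T_i$ lies in exactly one block, and corrupted coordinates outside $T_i$ lie in none; therefore the number of $j\in[k]$ with $T_{ij}\cap E\ne\emptyset$ is at most $|E\cap T_i|\le|E|\le\delta N$. As $j$ is uniform over $[k]$, the failure probability is at most $\delta N/k=\rho\delta/\lambda$ (using $k=\lambda N/\rho$), so $\Pr[D^y(i)=x_i]\ge 1-\rho\delta/\lambda$, which is the asserted local decodability (reading the ``$>$'' of Definition~\ref{definition:ldc} as ``$\ge$'', or after an infinitesimal slack in $\delta$). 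I expect no genuine obstacle here — everything except the disjointness count is bookkeeping — but it is precisely that counting step which explains why later we will want the $T_i$ to be split into blocks using a Baranyai-type partition.
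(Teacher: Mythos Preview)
Your proposal is correct and follows essentially the same approach as the paper: define the decoder by sampling $j\in[k]$ uniformly, read the $\rho$ symbols indexed by $T_{ij}$, output the fixed linear combination, and bound the number of bad blocks by $\delta N$ using disjointness of the partition. If anything, your treatment of the counting step (each corrupted coordinate hits at most one block because the $T_{ij}$ partition $T_i$) is more explicit than the paper's, which just asserts the bound.
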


\begin{proof}
We need to provide  a probabilistic decoder $D$ for the code $C$ such that
both requirements in Definition \ref{definition:ldc} are satisfied.
Let $x\in \mathbb{F}^k$  be any message and let $C(x)$ be its codeword.
As required, the decoder $D$ is given access to the coordinates of a
word $y\in \mathbb{F}^N$ and asked to recover a particular symbol of the message
$x$, say $x_i$, where $i\in [n]$.
Given the public knowledge of the subsets $T_1,\ldots, T_n$ and their partitions, our decoder
$D$  picks a random integer $j\in [k]$ and looks at the coordinates of $y$ labeled by
the elements of $T_{ij}$, i.e., $\{y_\ell: \ell\in T_{ij}\}$.  Since the $i$th unit vector
$e_i$ is a linear combination of $\{G_\ell: \ell\in T_{ij}\}$, there are $|T_{ij}|=\lambda N/k$
field elements $\{c_\ell: \ell\in T_{ij} \}$ such that
$e_i=\sum_{\ell\in T_{ij}} c_\ell G_\ell$.  Knowing the field elements
$\{c_\ell: \ell\in T_{ij}\}$, our decoder simply outputs
$\sum_{\ell\in T_{ij}}c_\ell y_\ell$.

 Firstly, it is clear that our decoder looks at
 $\lambda N/k=\rho$ coordinates of $y$, i.e., its query complexity is $\rho$.
 Secondly, we shall  compute $\Pr[D^y(i)=x_i]$.
If $y=C(x)$, then  we  have that
\begin{equation}
\label{equation:recover}
\sum_{\ell\in T_{ij}}c_\ell y_\ell=
 \sum_{\ell\in T_{ij}}c_\ell\cdot  \langle  x, G_\ell\rangle=
 \langle x, \sum_{\ell\in T_{ij}}c_\ell G_\ell \rangle=\langle x, e_i \rangle=x_i,
 \end{equation}
 where $\langle \cdot, \cdot \rangle$ stands for the standard dot product.
Clearly, the equation (\ref{equation:recover}) says that $D$ always succeeds
in recovering $x_i$ when $y=C(x)$, i.e., $\Pr[D^{C(x)}(i)=x_i]=1$.
 Now  suppose that $y\in \mathbb{F}^N$ is a word such that $\Delta(C(x),y)\leq \delta N$
 for a small constant $0\leq \delta < 1$. In this case, the decoder $D$ may not output
 $x_i$ correctly since some of the coordinates $\{y_\ell: \ell\in T_{ij}\}$
may have been corrupted and  consequently the left hand side of
(\ref{equation:recover})  is not equal to $x_i$.  We say that $j\in [k]$ is {\em bad}
if there is at least one $\ell\in T_{ij}$ such that $y_\ell$ is not equal to the $\ell$th coordinate of
$C(x)_\ell$; otherwise,
$j$ is called {\em good}.  Equation (2) shows that $D$ can correctly output
$x_i$ when the
$j$  is good. Therefore, $\Pr[D^y(i)=x_i]$ is at least the probability that
the $j$ chosen by $D$ is good. However, since $\Delta(C(x),y)\leq \delta N$,
at most $\delta N$ of the coordinates of $y_\ell$ are not consistent with $C(x)$ and thus
at most $\lambda N$ of the indices $j\in [k]$ are bad.  It follows that
$$\Pr[D^y(i)=x_i]\geq \Pr[j{\rm~is~good}]\geq 1-\delta N/k=1-\rho \delta/\lambda,$$
which in turn implies that $C$ is a $(\rho,\delta,\rho \delta/\lambda)$-locally decodable code
 that   encodes messages of $n$ symbols as codewords of $N$ symbols.
\end{proof}

\subsection{The Construction}

In this section, we present our combinatorial construction of locally decodable codes.
Our construction is based on the combinatorial objects called  hypergraphs that are defined
in Section \ref{section:baranyai}.

The locally decodable codes we construct here are  linear codes over the binary field
$\mathbb{F}_2$.
Let $\rho>1$ be any odd positive integer  and let $n= \rho u +1$
for any positive integer $u$.  Let $K_n^u$ be the $u$-complete hypergraph
over an $n$-element vertex set $\Omega=\{\omega_1,\ldots, \omega_n\}$ and let
$A=(a_{ij})_{n\times N}=[A_1,\ldots, A_N]$  be the incidence matrix
of $K_n^u$, where $N={n\choose u}$ is the number of all  $u$-subsets of
 $\Omega$ and $A_j$ is the $j$th column of $A$ for every $j\in[N]$.
Since the entries of $A$ are either 0 or 1, we can consider the matrix
$A$ over the binary field $\mathbb{F}_2$. We define $G=(g_{ij})_{n\times N}=[G_1,\ldots, G_N]$
be  a binary matrix  such that $g_{ij}=1+a_{ij}$ for every $i\in[n]$ and $j\in [N]$,
where $G_j$ is the $j$th column for every $j\in[N]$.
Our code   $C:\mathbb{F}_2^n\rightarrow \mathbb{F}_2^N$
has  generator matrix $G$, i.e.,  any message $x\in \mathbb{F}_2$ will be encoded as
$C(x)=x G$.

We have to show that the code $C$ we constructed above is locally decodable.
In fact, this is a consequence of  Corollary \ref{corollary:baranyai} and Lemma
\ref{lemma:technical}. Formally, we have that
\begin{theorem}
The code $C:\mathbb{F}_2^n\rightarrow \mathbb{F}_2^N$  is
$(\rho,\delta,\rho\delta/\lambda)$-locally decodable, where $\lambda=1-u/n.$
\end{theorem}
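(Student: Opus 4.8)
The plan is to deduce this statement directly from Lemma \ref{lemma:technical} combined with Corollary \ref{corollary:baranyai}, so the entire task is to exhibit the subsets $T_1,\dots,T_n\subseteq[N]$ together with their partitions and to verify conditions (i) and (ii) of that lemma. The one observation that makes everything fit is a divisibility trick: although $u$ need not divide $n=\rho u+1$, it does divide $n-1=\rho u$, so Baranyai's corollary should be applied not to $K_n^u$ itself but to the complete hypergraph on $\Omega$ with one vertex deleted.

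Concretely, for each $i\in[n]$ I would take $T_i\subseteq[N]$ to be the set of indices $\ell$ with $\omega_i\notin E_\ell$, i.e.\ the columns of $G$ whose $i$th entry equals $1$. These indices are precisely the edges of the complete hypergraph $K_{n-1}^u$ on the $(n-1)$-element set $\Omega\setminus\{\omega_i\}$, so $|T_i|=\binom{n-1}{u}=(1-u/n)\binom{n}{u}=\lambda N$ by a standard binomial identity. Since $u\mid n-1$, Corollary \ref{corollary:baranyai} applied to $K_{n-1}^u$ splits it into $k=\binom{n-2}{u-1}$ partial hypergraphs $H_{i1},\dots,H_{ik}$, each $1$-regular of order $n-1$ with $(n-1)/u=\rho$ edges. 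Letting $T_{ij}$ be the index set of $H_{ij}$ gives the partition $T_i=T_{i1}\cup\cdots\cup T_{ik}$ with $|T_{ij}|=\rho$. Using $\binom{n-1}{u}=\tfrac{n-1}{u}\binom{n-2}{u-1}$ one checks $k\mid\lambda N$ and $\lambda N/k=(n-1)/u=\rho$, which matches the odd integer fixed at the start of the construction; this establishes condition (i) and identifies the query complexity.

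The main step is condition (ii): for every $i\in[n]$ and $j\in[k]$, the unit vector $e_i$ must be a linear combination over $\mathbb{F}_2$ of $\{G_\ell:\ell\in T_{ij}\}$, and this is exactly where the hypothesis that $\rho$ is odd is consumed. I would simply sum all those columns. Because $T_{ij}$ corresponds to a partition of $\Omega\setminus\{\omega_i\}$ into $\rho$ blocks of size $u$, every vertex $\omega_s$ with $s\ne i$ lies in exactly one block, so among the $\rho$ relevant entries $a_{s\ell}$ exactly one equals $1$; hence the $s$th coordinate of $\sum_{\ell\in T_{ij}}G_\ell$ equals $\sum_{\ell\in T_{ij}}(1+a_{s\ell})=\rho+1\equiv 0\pmod 2$. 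For the deleted vertex $\omega_i$ itself, no block contains it, so all $\rho$ entries $a_{i\ell}$ vanish and the $i$th coordinate of the sum is $\rho\equiv 1\pmod 2$. Therefore $\sum_{\ell\in T_{ij}}G_\ell=e_i$, which verifies (ii) with all coefficients equal to $1$.

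With (i) and (ii) in place, Lemma \ref{lemma:technical} immediately gives that $C$ is $(\rho,\delta,\rho\delta/\lambda)$-locally decodable with $\rho=\lambda N/k=(n-1)/u$ and $\lambda=1-u/n$, as claimed. I expect the only genuine obstacle to be conceptual rather than computational: spotting that one should delete a vertex so that Baranyai's corollary becomes applicable, and noticing that the parity cancellation in the column sum is precisely what forces $\rho=(n-1)/u$ to be odd. The remaining cardinality bookkeeping is routine.
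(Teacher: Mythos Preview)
Your proposal is correct and follows essentially the same route as the paper: define $T_i$ as the columns with $g_{i\ell}=1$ (the $u$-subsets avoiding $\omega_i$), apply Baranyai's corollary to $K_{n-1}^u$ using $u\mid n-1$ to obtain the partitions $T_{ij}$, and verify condition (ii) by summing the $\rho$ columns and using the parity of $\rho$. The only cosmetic difference is that you compute the $s$th coordinate of the column sum as $\rho+1\equiv 0$ over the integers reduced mod $2$, whereas the paper phrases it as $\rho-1$ ones in row $s$; these are of course the same.
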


\begin{proof}
Let $k=\lambda N/\rho={n-1\choose u}/\rho$. Due to Lemma 2, we only need to show that there are
$n$ subsets $T_1,\dots,T_n\subseteq [N]$ such that  both (i) and (ii)  hold.
For every $i\in[n]$, let
$$T_i=\{j\in [N]: g_{ij}=1\}.$$
As the incidence matrix $A$, we can consider the rows and columns of $G$ are labeled by the $n$
elements of $\Omega$ and the $N$ $u$-subsets of $\Omega$, respectively.
Due to the definition of $G$, it is then straightforward to see that $T_i$
corresponds to the set of all $u$-subsets of $\Omega\setminus \{\omega_i\}$ for every
$i\in[n]$.  Clearly, we have that $|T_1|=\cdots=|T_n|={n-1\choose u}=\lambda N$.

We consider the submatrix  $G^{(i)}$ of $G$ that consists of
all columns of $G$ labeled by $T_i$ (or equivalently, by all $u$-subsets of
$\Omega\setminus \{\omega_i\}$).  Let $J$ be the all-one matrix of size
$(n-1)\times {n-1\choose u}$ and
$B$ be the incidence matrix of
the $u$-complete hypergraph of order $n-1$ (i.e., $K_{n-1}^{u}$)
on $\Omega\setminus \{\omega_i\}$.
Clearly, we have that
$$G^{(i)}=
\begin{pmatrix}
F_1\\
{\bf 1}\\
F_2
\end{pmatrix},$$
where ${\bf 1}$ is the all-one row vector of dimension ${n-1\choose u}$,
$F_1$ consists of the first $i-1$ rows of $J-B$ and
$F_2$ consists of the last $n-1-i$ rows of $J-B$.
Since $u|(n-1)$, Corollary \ref{corollary:baranyai} implies that we can  partition the
set of columns of $G^{(i)}$ into $k={n-2\choose u-1}$ subsets, say $T_{i1}, \ldots, T_{ik}\subseteq
T_i$, such that each $T_{ij}$ is a partition of $\Omega\setminus \{\omega_i\}$
for every $j\in[k]$. It is clear that $|T_{ij}|=(n-1)/u=\lambda N/k$ for every $j\in[k]$,
which implies that (i) holds. On the other hand, for every $j\in[k]$, we consider the
submatrix $G^{(i,j)}$ of $G^{(i)}$ that consists of all the columns of $G^{(i)}$
labeled by elements in $T_{ij}$.
Clearly,  there are $\rho$ subsets $S_1,\ldots, S_\rho\subseteq \Omega\setminus\{\omega_i\}$
of cardinality $u$ corresponding to the indices in $T_{ij}$. In particular,
these subsets form a partition of $\Omega\setminus \{\omega_i\}$.
We note that every  $\bar{i}\in [n]\setminus \{i\}$ appears in exactly one of
the $\rho$ sets and therefore the $\bar{i}$th row of $G^{(i,j)}$ contains exactly
$\rho-1$ 1's and one 0's.  On the other hand,  the $i$th row of $G^{(i,j)}$ is the all-one vector
of dimension $\rho$. It follows that the sum of the columns of
 $G^{(i,j)}$ is
$$\begin{pmatrix}
(\rho-1)\cdot J_1\\
1\\
(\rho-1)\cdot J_2
\end{pmatrix},$$
where $J_1$ is the all-one vector of dimension $i-1$ and $J_2$ is the all-one vector of
dimension $n-i$. Clearly, this is the $i$th unit vector over the binary field since $\rho$
is an odd integer.  In other words,  (ii) holds. Due to Lemma \ref{lemma:technical},
our code $C$ must be $(\rho,\delta,\rho\delta/\lambda)$-locally decodable.
\end{proof}

In our construction,  the query complexity $\rho$ of the code $C$ should be a constant
and the quality of the code should
 be measured by the asymptotic length $N$ as a function of $n$.
 Due to basic mathematics, we have that
 $$N={n\choose u}={n\choose (n-1)/\rho}\leq 2^{{\bf H}(1/\rho)n},$$
 where ${\bf H}(s)=-s\log_2 s-(1-s)\log_2(1-s)$ is the binary entropy
 function. It is clear that our code is asymptotically more efficient than
 the Welsh-Hadamard code whenever $\rho>1$ is an odd integer.
For example,  we have that $N\leq 2^{0.92n}$ when $\rho=3$; and
$N\leq 2^{0.44n}$ when $\rho=1$, which is less than the square root of the length
of the Walsh-Hadamard code.
On the other hand, we should also have an estimation of the probability that
our decoder $D$ gives an incorrect output, i.e., $\epsilon=\rho \delta/\lambda$.
However, it is easy to see that $\lambda=1-u/n=1-u/(\rho u+1)>1-1/\rho$ and
therefore $\epsilon<\rho^2\delta/(\rho-1)$, which is a constant as long as $\delta$ is
a constant. Hence, we have the following theorem:

\begin{theorem}
For every odd  integer $\rho>1$,  there is a binary
 linear  $(\rho, \delta, \rho^2\delta/(\rho-1))$-locally decodable code that
encodes  $n$-bit messages as   $2^{{\bf H}(1/\rho)n}$-bit codewords.
\end{theorem}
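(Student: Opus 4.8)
The plan is to derive this final statement as an immediate corollary of the preceding theorem, which already establishes that the code $C:\mathbb{F}_2^n\rightarrow\mathbb{F}_2^N$ is $(\rho,\delta,\rho\delta/\lambda)$-locally decodable with $\lambda=1-u/n$ and $N={n\choose u}$. Nothing deep remains; the work is two routine estimates. First I would bound the error probability: since $n=\rho u+1$, we have $\lambda = 1-u/n = 1 - u/(\rho u+1)$, and because $\rho u + 1 > \rho u$ it follows that $u/(\rho u+1) < 1/\rho$, hence $\lambda > 1-1/\rho = (\rho-1)/\rho$. Plugging into the error term gives $\rho\delta/\lambda < \rho\delta\cdot\rho/(\rho-1) = \rho^2\delta/(\rho-1)$. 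Since a $(\rho,\delta,\epsilon')$-locally decodable code is automatically $(\rho,\delta,\epsilon)$-locally decodable for any $\epsilon\geq\epsilon'$ (a larger allowed failure probability is a weaker requirement), the code is $(\rho,\delta,\rho^2\delta/(\rho-1))$-locally decodable.

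Second I would bound the length. With $u=(n-1)/\rho$, the standard binomial-entropy inequality ${m\choose \alpha m}\leq 2^{{\bf H}(\alpha)m}$ gives $N={n\choose u}={n\choose (n-1)/\rho}\leq 2^{{\bf H}(1/\rho)n}$, where one uses ${n\choose (n-1)/\rho}\leq{n\choose n/\rho}$ together with ${\bf H}$ being increasing on $(0,1/2)$, or simply applies the entropy bound directly to the fraction $(n-1)/(\rho n)<1/\rho$. Either way the exponent is at most ${\bf H}(1/\rho)n$ for all sufficiently large $n$, and one can pad the codeword length up to exactly $2^{{\bf H}(1/\rho)n}$ if an exact figure is wanted (padding does not hurt local decodability). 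This yields a code of length $2^{{\bf H}(1/\rho)n}$ as claimed.

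There is essentially no obstacle here — the only thing to be careful about is the direction of the inequalities in the $\lambda$ estimate (one wants a lower bound on $\lambda$ to get an upper bound on $\rho\delta/\lambda$) and the monotonicity of ${\bf H}$ when replacing $(n-1)/\rho$ by $n/\rho$ in the binomial coefficient. I would state both estimates in one or two sentences each and then invoke the previous theorem to conclude. If one wanted the statement to cover $\rho=1$ as well (as the surrounding discussion hints, giving $N\leq 2^{0.44n}$), the error-probability formula would need separate handling since $\rho-1=0$; but as the theorem is stated only for odd $\rho>1$, this case need not be addressed.

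Thus the full proof is simply: apply the previous theorem, substitute $n=\rho u+1$, observe $\lambda>(\rho-1)/\rho$ to get $\epsilon=\rho\delta/\lambda<\rho^2\delta/(\rho-1)$, observe $N={n\choose u}\leq 2^{{\bf H}(1/\rho)n}$, and conclude.
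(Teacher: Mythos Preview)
Your proposal is correct and follows exactly the paper's own argument: it derives the theorem from the preceding result via the two estimates $\lambda = 1 - u/(\rho u+1) > 1 - 1/\rho$ (yielding $\rho\delta/\lambda < \rho^2\delta/(\rho-1)$) and $N = \binom{n}{(n-1)/\rho} \leq 2^{{\bf H}(1/\rho)n}$. The paper's treatment is slightly terser but identical in substance.
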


\begin{example}
Let $n=7$ and $u=2$.  Then generator matrix $G$ of our $(3,
\delta,4.5\delta)$-locally decodable code $C$ can be depicted by
the Figure \ref{fig:generator-matrix}, where the rows and columns of
$G$ are labeled by elements of $\Omega=\{1,2,3,4,5,6,7\}$
and all 2-subsets of $\Omega$, respectively.
For explanation, we also highlight the set $T_7$
and the matrix $G^{(7)}$ in Figure \ref{fig:generator-matrix}.
As an example, in order to recover the 7th message bit,
our decoder $D$ may  look at 3 coordinates
of the received word $y$ that are labeled by any one the following subsets:
$$T_{71}=\{12, 34, 56\},~
 T_{72}=\{13, 25, 46\},~
 T_{73}=\{14, 26, 35\},$$$$
 T_{74}=\{15, 24, 36\},~
 T_{75}=\{16, 23, 45\}.$$
 In fact, these sets correspond to the partition of $K_6^2$ we noted in Example
 \ref{example:partition}.
\end{example}

\begin{figure}[H]
$$\hspace{0.95cm}
  \begin{array}{ccccccccccccccccccccccccccc}
      {\bf 1} &  {\bf 1} &  {\bf 1} &  {\bf 1} &  {\bf 1} &   1 &  {\bf 2} &
      {\bf 2} &  {\bf 2} &  {\bf 2} &        2 &  {\bf 3} &  {\bf 3} &  {\bf 3} &
      3       &  {\bf 4} &  {\bf 4} &  4       &  {\bf 5} &  5       &  6 & \\

      {\bf 2}       &  {\bf 3}       & {\bf  4}       &  {\bf 5}       &  {\bf 6}       &  7       &  {\bf 3} &
      {\bf 4} &  {\bf 5} &  {\bf 6} &  7 &  {\bf 4} &  {\bf 5} &  {\bf 6} &
      7 &  {\bf 5} &  {\bf 6} &  7 &  {\bf 6} &  7 &  7 &
\end{array}
$$$$
  \begin{array}{c}
   1  \\
  2 \\
  3 \\
  4 \\
  5 \\
  6 \\
  7
  \end{array}	
\left(
  \begin{array}{ccccccccccccccccccccc}
  {\bf 0} & {\bf 0} & {\bf 0} & {\bf 0} & {\bf 0} &  0 & {\bf 1} & {\bf 1} & {\bf 1} & {\bf 1} & 1 & {\bf 1} & {\bf 1} & {\bf 1} & 1 & {\bf 1} & {\bf 1} & 1 & {\bf 1} & 1 & 1 \\
  {\bf 0} & {\bf 1} & {\bf 1} & {\bf 1} & {\bf 1} &  1 & {\bf 0} & {\bf 0} & {\bf 0} & {\bf 0} & 0 & {\bf 1} & {\bf 1} & {\bf 1} & 1 & {\bf 1} & {\bf 1} & 1 & {\bf 1} & 1 & 1 \\
  {\bf 1} & {\bf 0} & {\bf 1} & {\bf 1} & {\bf 1} &  1 & {\bf 0} & {\bf 1} & {\bf 1} & {\bf 1} & 1 & {\bf 0} & {\bf 0} & {\bf 0} & 0 & {\bf 1} & {\bf 1} & 1 & {\bf 1} & 1 & 1 \\
  {\bf 1} & {\bf 1} & {\bf 0} & {\bf 1} & {\bf 1} &  1 & {\bf 1} & {\bf 0} & {\bf 1} & {\bf 1} & 1 & {\bf 0} & {\bf 1} & {\bf 1} & 1 & {\bf 0} & {\bf 0} & 0 & {\bf 1} & 1 & 1 \\
  {\bf 1} & {\bf 1} & {\bf 1} & {\bf 0} & {\bf 1} &  1 & {\bf 1} & {\bf 1} & {\bf 0} & {\bf 1} & 1 & {\bf 1} & {\bf 0} & {\bf 1} & 1 & {\bf 0} & {\bf 1} & 1 & {\bf 0} & 0 & 1 \\
  {\bf 1} & {\bf 1} & {\bf 1} & {\bf 1} & {\bf 0} &  1 & {\bf 1} & {\bf 1} & {\bf 1} & {\bf 0} & 1 & {\bf 1} & {\bf 1} & {\bf 0} & 1 & {\bf 1} & {\bf 0} & 1 & {\bf 0} & 1 & 0 \\
  {\bf 1} & {\bf 1} & {\bf 1} & {\bf 1} & {\bf 1} &  0 & {\bf 1} & {\bf 1} & {\bf 1} & {\bf 1} & 0 & {\bf 1} & {\bf 1} & {\bf 1} & 0 & {\bf 1} & {\bf 1} & 0 & {\bf 1} & 0 & 0 \\
\end{array}
\right)
$$
\label{fig:generator-matrix}
\caption{Generator matrix of a 3-query locally decodable code}
\end{figure}

\section{Concluding Remarks}

In this paper, we present an application of Baranyai's theorem.
In particular, we constructed a $(\rho,\delta,\rho^2\delta/(\rho-1))$-locally decodable
code that encodes $n$-bit messages as $2^{{\bf H}(1/\rho)n}$-bit messages, where $\rho>1$
is any odd integer. Our construction does not improve the parameters of the known
constructions
to date. However, it is still interesting in the sense that the underlying
techniques are purely combinatorial while all the known constructions heavily rely
on algebraic techniques.

\bibliography{ldc}

\end{document}